\newcommand{\eat}[1]{}
\newcommand {\beq}{\begin{equation}}
\newcommand {\eeq}{\end{equation}}
\newcommand {\barr}{\begin{array}}
\newcommand {\earr}{\end{array}}
\newcommand {\bearn}{\begin{eqnarray*}}
\newcommand {\eearn}{\end{eqnarray*}}
\newcommand {\bear}{\begin{eqnarray}}
\newcommand {\eear}{\end{eqnarray}}
\newtheorem{theorem}{Theorem}[section]
\newtheorem{proposition}{Proposition}[section]
\newtheorem{definition}{Definition}[section]
\renewenvironment{definition}{\refstepcounter{definition}\par\noindent\textit{Definition~\thedefinition:}\xspace}{\nobreak\hfill$\Diamond$\par}
\newtheorem{example}{Example}[section]
\newlength{\labelexample}
\title{Flexible Content Placement in Cache Networks using Reinforced Counters}
\author{Guilherme Domingues\inst{1}, \\ Edmundo de Souza e Silva\inst{1}, Rosa M. M. Leao\inst{1}, Daniel S. Menasche\inst{1}}
\address{Universidade Federal do Rio de Janeiro
  (UFRJ)\\
  Rio de Janeiro, RJ -- Brazil
  \email{\{guilhdom, edmundo, rosam, sadoc\}@land.ufrj.br}
}
\begin{document} 

\maketitle


\begin{abstract}
In this paper we study the problem of content placement  in a cache network.  We consider
a network where routing of requests  is based on random walks. Content placement is done using
a novel 
mechanism referred to as ``reinforced counters''.   To each content we associate a counter,
which is incremented every time the content is requested, and which is decremented at a 
fixed rate.   We model and analyze this mechanism, tuning its parameters so as to achieve desired 
performance goals for a single cache or for a cache 
network.  We also show that the optimal static content placement, without reinforced counters, is
NP hard under different design goals.  
\end{abstract}


\section{Introduction}

In today's Internet the demand for multimedia files and the sizes of these files are 
steadily increasing. 
The popularity of Youtube, Dropbox and one-click
file sharing systems, such as Megaupload, 
motivate researchers to seek for novel cost-effective content 
dissemination solutions. 

\emph{Caching} is one of the most classical solutions to increase the load supported by computer systems.  
In essence, caching consists of transparently storing data in a way that future requests can be served faster.
Caching in the realm of standalone computer architectures received significant attention from 
the research community since the early sixties, 
and web-caching has also been studied for at least one decade.  
The objects of study of this work, in contrast, are cache networks, which were proposed
and  started to receive focus much more recently~\cite{non, ccn1}.

Cache networks comprise two main features: caching and routing,
both performed by the same core component: a cache router.
In a cache network, content traverses the network from
sources to destinations, but can in turn be stored in caches strategically 
placed on top of the routers.  
The \emph{data plane} is responsible for transmitting contents whereas the \emph{control plane} 
is responsible for the routing of requests.  
The system considered in this paper is illustrated in Figure~\ref{system}.  
A request for file $F$ is routed using random walks through
the caches.  
Every time the request hits a cache, the cache selects uniformly at random one of the \eat{other} 
links that are incident to the cache
and uses it to forward the request.  
Once the request reaches a cache where the content is stored, the content is transferred to the requester
through the data plane.
We assume that all contents are stored in at least one cache.  
%
\begin{figure}
\begin{center}
\includegraphics[scale=0.20]{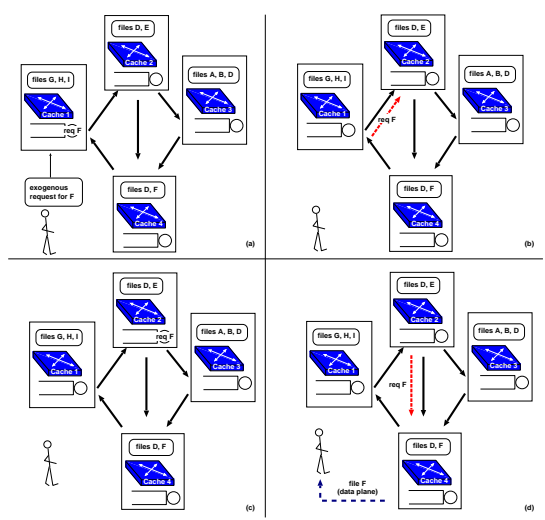} 
\end{center}
\caption{Illustration of how system works (cache-router symbol introduced in~\cite{breadcrumbs1}).  $(a)$ An exogenous request for file $F$ is placed and processed in cache 1;  $(b)$ the request is routed through the control plane, and yields an endogenous request  to cache 2; $(c)$ the request is processed in cache 2; $(d)$ the request is routed through the control plane to cache 4. The file is found and is transferred to the user through the data plane.}
\label{system}
\end{figure}


In this paper we study the problem of content placement in cache networks.   
We pose the following questions: (1) How to store and evict contents from a cache in a 
flexible and scalable manner? (2) How to efficiently and distributedly place content in a cache network?
%

Content placement is done using a novel 
mechanism referred to as ``reinforced counters'' \cite{sbrc13,wperf14}. 
To each content we associate a counter,
which is incremented every time the content is requested, and which is decremented at a 
fixed rate.   
We model and analyze this mechanism, tuning its parameters so as to achieve desired 
performance goals for a single cache or for a cache network. 

Each cache in a cache network has a given service capacity.  
The service capacity of a cache is related to the time it takes to 1) find
content in the cache, 2) return that content to the user, through the data plane,  
in case of a cache hit and 3) route the request to another cache, through the control plane, 
in case of a cache miss.  
Note that in the two latter cases, the service capacity of a cache network accounts for network delays due to 
transmission and queueing.
A cache is stable if, for a given workload, the queue of pending requests does not grow unboundedly with 
respect to time.
A cache network is stable if, for a given workload, all its caches are stable.
We  show that the optimal static content placement, without reinforced counters, is
NP hard under different design goals accounting for stability and content availability.  

To summarize, in (partially) answering the questions above we make the following contributions
\begin{itemize}
\item we introduce reinforced counters as a way to flexibly store and evict contents from a cache, showing that they are amenable to analytical study and optimal tuning  
\item we propose a new formulation of the optimal content placement in cache networks accounting for stability. Then, we show that static content placement, without reinforced counters, is NP hard, which motivates the use of reinforced counters or variants in a network setting.
\end{itemize}

The rest of this paper is organized as follows.  
Section \textsection\ref{sec:reinforced} studies the single cache police based on 
the reinforced counters.  
In \textsection\ref{model} we address the problem of multiple caches
and 
we show that under different setting the problem at hand is NP hard. 
In \textsection\ref{related} we present related work, and \textsection\ref{concl}
concludes.

\section{Single Cache: Reinforced Counters and Flexible Content Placement}
\label{sec:reinforced}

In this section, we study  placement and eviction policies for a single cache
under the assumption that
the dynamics of each content is decoupled from the others. 
Decoupled content dynamics
yields tractable analysis and can be used for approximate 
systems with fixed capacity.
They are also of interest in the context of DNS caches and the novel
Amazon ElastiCache system (\url{http://aws.amazon.com/elasticache/}).

In the policies to be introduced in this section, the expected
number of items in the system can be controlled. 
Let $\pi_{up}$ be the probability that each  content is stored in the cache. 
Given a collection of $N$ contents the expected number of stored contents is $N \pi_{up}$,
which can be controlled or bounded according to users needs.  
In what follows, we consider the problem of controlling $\pi_{up}$ using
a simple mechanism referred to as \emph{reinforced counters} (Section \ref{sec:single-threshold}). 
In Section \ref{sec:single-t-ex}
we illustrate how the mechanism works through some simple numerical examples.   
Then, in Section \ref{sec:hyst} we extend the reinforced counters to allow for hysteresis,
showing the benefits of hysteresis
for higher predictability  and reduced chances of content removal before full download.

\subsection{Reinforced counter with a single threshold}
\label{sec:single-threshold}

We consider a special class of content placement mechanisms, henceforth
referred to as reinforced counters
\cite{sbrc13,wperf14}. 
To each content we associate a reinforced counter,
which is increased by one  every time the content is requested, and is decreased by one 
as a timer ticks.  
Let $K$ be the reinforced counter eviction threshold.
Whenever the counter is decremented from $K+1$ to  $K$ the content is evicted.
The timer ticks every $1/\mu$ seconds.
Henceforth, we assume that the time between ticks
is exponentially distributed.   
This mechanism is similar to TTL caches, employed
by DNS and web-caching systems~\cite{exact, towsley1} with additional flexibility to allow
for content to be inserted in the cache only after the threshold $K$ is surpassed.  
The advantages of using a threshold $K$ over existing mechanisms will be shown in this section.

We assume that the behavior of each of the contents is decoupled.
This assumption is of interest  in  at least three scenarios. 
\begin{itemize}

\item {\em Cache capacity is infinite} - 
In the cloud, we may assume that storage is infinite. Users
might incur costs associated to larger amounts of stored content, but the storage space itself
is unlimited.  Therefore, costs are proportional to the space used, and constraints are soft (as opposed
to hard). 

\item  {\em Approximation of the behavior of fixed capacity system} - 
Under the mean field approximation~\cite{towsley1}, it has been
shown that decoupling the dynamics of multiple contents can lead  to reasonable approximations
to the content hit probabilities.    In this case, instead of considering that the cache has
a fixed capacity, it is assumed that there are constraints on the expected number of items
in the cache.
Note that under this approximation the cache capacity design problem can be compared to
the problem of calculating the capacity of communication lines in a telephone network.

\item {\em Time-to-live caches} - 
Roughly, when capacity is finite, one can take advantage of
statistical multiplexing either by
evicting items from the cache when (a) an overflow occurs or 
(b) when the item expire. 
In the latter case, if the probability of cache overflow is low,
the approximate analysis of  hit probabilities can be  done assuming that each item is
decoupled from the others~\cite{towsley1}.
This is particularly relevant in the context of
time-to-live caches (such as those used by the DNS system) where entries need to be
renewed from time to time to avoid staleness.

\end{itemize}

Our goal in the remainder of this section is to show the advantages of having a content
placement mechanism with two associated knobs, $\mu$ and $K$, allowing for the user
to fine tune both the fraction of time in which the content is in the cache (steady state metric) 
while at the same time controlling the mean time between content insertions 
(or, equivalently, controlling the rate at which content is evicted or brought back into cache)
so as to avoid that content is replaced too fast and content starvation (that is, content
is never included into cache or removed from it during a finite but large time interval).
The timer tick rate $\mu$ and the reinforced counter threshold $K$ can be tuned so as to adjust 
the long term fraction of time in which the content is stored in the cache
and to guarantee that the mean time between content eviction and content reinsertion into the cache is bounded.

In what follows, we assume
that requests for a given content arrive according to a Poisson process with rate $\lambda$.
$\lambda$ is also referred to as the  content popularity.
(Recall that the behavior of each content is decoupled for others.)
Figure \ref{fig:cache-in-out} is useful to illustrate the different intervals of the
cache content replacement and the notation we use.
The blue (red) intervals in the figure indicate that the content is stored (or not) in cache.
If content is not in cache it is brought into cache when a new request for it arrives
and the reinforced counter is at the threshold $K$.
On the other hand, if the value of the reinforced counter is at $K+1$ and
the counter ticks, the content is removed from cache.
Note that we adopt the same assumption as in~\cite{towsley1}:
the insertion and eviction of a content  is not influenced by other contents in the same cache.
As mentioned above, fixed storage in cache is modeled by considering the expected
number of contents into the cache.
%
\begin{figure}[h!]
\center
\includegraphics[scale=0.6]{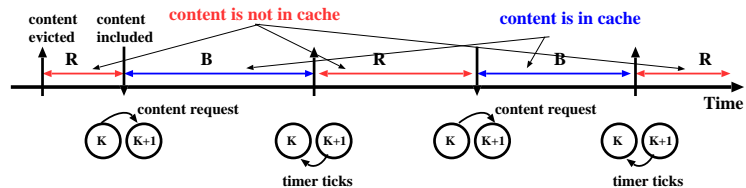}
\caption{Request counter and notation.
\label{fig:cache-in-out}
}
\end{figure}

Let $\pi_{up}$ be the fraction of
time in which the content is in the cache. 
Each cache content alternates from periods of inclusion and exclusion from
cache (see Figure \ref{fig:cache-in-out}), and we have (renewal theory),
\begin{equation}
\label{eq:pi_up-renew}
\pi_{up} = \frac{E[B]}{E[R] + E[B]}
\end{equation}
where $E[R]$ is the mean time that a content takes to return to the cache once it is evicted and
$E[B]$ is
the mean time that the content remains in the cache after insertion.
It should be clear that the reinforced counter mechanism
can be modeled as an M/M/1 queueing system, with state equal to the value of the
reinforced counter.
Then,
\begin{equation}
\label{eq:pi_up}
\pi_{up} = \sum_{i=K+1}^\infty (1-\rho) \rho^i = {\rho^{K+1}}
\end{equation}
where $\rho=\lambda/\mu$.   

Let $\gamma(K, \mu)$ be the rate at which content enters the cache.
Due to flow balance, in steady state, $\gamma(K, \mu)$ 
equals the rate at which content leaves the cache,
\begin{equation}
\label{enter}
\gamma(K,\mu) = \mu \rho^{K+1}(1-\rho) = \lambda \rho^K ( 1-\rho) = \frac{1} {E[B] + E[R]} 
\end{equation}
where the last equality can be easily inferred from Figure \ref{fig:cache-in-out} (renewal arguments).

$E[B]$ can be calculated from first passage time arguments, that is the time it
takes from the system to return to state $K$ (eviction) once it is brought into system
(state $(K+1)$).
From the M/M/1 model, it can also be calculated
by the busy period of an M/G/1 queue, which equals 
\begin{equation}
\label{eq:expectB}
E[B]=1/(\mu-\lambda) .  
\end{equation}
Then, from (\ref{eq:expectB}) and (\ref{eq:pi_up-renew})
\begin{equation}
\pi_{up} = \frac{1/(\mu-\lambda)}{1/(\mu-\lambda) + E[R]}  .
\end{equation}
Once the content is evicted the mean time for it to return to the cache is given by 
\begin{equation}
E[R]=(1-\pi_{up})/(\pi_{up}(\mu-\lambda)) \end{equation}

Given a fixed $\pi_{up}$, it is possible to write $\rho$,  $\mu$,  $E[R]$ and $\gamma$ as a function of $K$,
\begin{eqnarray}
\rho &=& {\pi_{up}^{1/(K+1)}} \label{eq:rho} \\
\mu &=& \lambda \pi_{up}^{-1/(K+1)} \label{eqmu}  \label{eq:mu} \\
E[R] &=& (1-\pi_{up})/(\pi_{up}(\lambda (\pi_{up})^{-1/(K+1)} -\lambda))  \label{er} \\
\gamma &=& \lambda \pi_{up}((\pi_{up})^{-1/(K + 1)} - 1) \label{gamma}
\end{eqnarray}

Note that there is a tradeoff in the choice of $K$, as increasing the value of $K$ 
reduces the rate at which content is inserted into the cache (see equation~\eqref{enter}),
which in turn reduces the steady state costs to download the content from external sources.
The larger the value of $K$, the smaller the steady state rate at which the content enters
and leaves the cache. 
But increasing the value of $K$, also increases the mean time for the content to be reinserted into the cache once the
content is evicted.
The larger the value of $K$, the longer the requesters
for a given content will have to wait in order to be able to download the content from the cache
after it is evicted (see equation~\eqref{er}). 

Let $K_{max}$ be the maximum  value allowed for  $K$. 
Motivated by the tradeoff above, given a fixed value of  $\pi_{up}$ 
we consider the
following optimization problem,
\begin{eqnarray}
\min_{K}  && \psi(K)= \alpha \gamma + \beta E[R]  \label{eq1} \\
\textrm{such that} \\
K &\leq& K_{max} \\
 \pi_{up} &=& {\rho^{K+1}} \label{eq2} \\
\rho &=& \lambda/(\lambda \pi_{up}^{-1/(K+1)}) \label{eq3} 
\end{eqnarray}
$\alpha$ and $\beta$ are used to control the relevance of long term and short term dynamics.
The long term dynamics reflect the behavior of the system after a long period of time, during
which the rate at which content enters the cache is given by $ \mu \rho^{K+1}(1-\rho)$.  
The short term dynamics reflect the behavior of the system during a shorter period of time,
during which one wants to guarantee that the mean time it takes for the content to return to the cache
is not too large.  
We should keep in mind that we choose $\pi_{up}$ 
to satisfy cache capacity limitations and system performance.

Substituting~\eqref{eq:rho}-\eqref{gamma} into ~\eqref{eq1}, the objective function is given by,
\begin{equation}
\psi(K)=\alpha \Big[ \lambda \pi_{up}((\pi_{up})^{-1/(K + 1)} - 1)  \Big] +\beta (1-\pi_{up}) \frac{1}{\pi_{up}\lambda} \Big[ \frac{1 }{ (\pi_{up})^{-1/(K+1)}-1} \Big]  \label{eqpik}
\end{equation}

The value of $E[R]$ must be bounded so as to avoid starvation, as
formalized in the propositions below.

\begin{proposition}
If $\beta=0$, the optimal strategy consists of setting $K=K_{max}$.  If $K_{max}=\infty$, it will
take infinite time for the content to be reinserted in the cache once it is evicted for the first time.
\end{proposition}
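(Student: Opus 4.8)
The plan is to reduce the statement to the monotonicity of two closed-form one-variable functions of $K$ that are already in hand. Put $\beta=0$ in~\eqref{eq1} (and assume the remaining weight $\alpha>0$, otherwise $\psi\equiv 0$ and the claim is vacuous). Then $\psi(K)=\alpha\,\gamma$, so the optimization is just to minimize $\gamma$ over the feasible range $K\le K_{max}$. Using~\eqref{gamma}, $\gamma(K)=\lambda\pi_{up}\big((\pi_{up})^{-1/(K+1)}-1\big)$. First I would record the sign facts: $0<\pi_{up}<1$ (so $\gamma>0$ and the exponent below is positive) and $\lambda>0$. Writing $c:=-\ln\pi_{up}>0$, we have $(\pi_{up})^{-1/(K+1)}=e^{c/(K+1)}$, and since $c/(K+1)$ is strictly decreasing in $K$ for $K\ge 0$, so is $e^{c/(K+1)}$; hence $\gamma(K)=\lambda\pi_{up}\big(e^{c/(K+1)}-1\big)$ is strictly decreasing in $K$, with $\gamma(K)\downarrow 0$ as $K\to\infty$. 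Therefore $\psi$ is strictly decreasing, and over $\{K:K\le K_{max}\}$ its minimizer is $K=K_{max}$. This is the first assertion.

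For the second assertion I would use the identity $E[R]=(1-\pi_{up})/\gamma$, which follows either from dividing~\eqref{er} by~\eqref{gamma}, or directly from the renewal relations $\pi_{up}=E[B]/(E[B]+E[R])$ and $\gamma=1/(E[B]+E[R])$ established around~\eqref{enter} (they give $1-\pi_{up}=E[R]\,\gamma$). When $K_{max}=\infty$ the first part says the optimum is obtained by driving $K\to\infty$; but then $\gamma(K)\downarrow 0$, so $E[R]=(1-\pi_{up})/\gamma(K)\uparrow\infty$. Quantitatively, $e^{c/(K+1)}-1$ behaves like $c/(K+1)$ for large $K$, so $E[R]\sim (1-\pi_{up})(K+1)/(\lambda\pi_{up}\,c)$, diverging linearly in $K$. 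Hence with no upper bound on $K$ the optimal choice forces the mean time to reinsert an evicted content to be unbounded, i.e.\ ``infinite'', which is exactly the stated conclusion.

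I do not expect a genuine obstacle here: the whole argument is a monotonicity check plus an elementary limit applied to expressions already derived. The points that need a little care are (i) recording the inequalities $0<\pi_{up}<1$, $\lambda>0$, $\alpha>0$ that make the monotonicity strict and the statement non-trivial; (ii) noting that $K$ ranges over nonnegative integers, so ``strictly decreasing'' is meant on that grid (the same inequality holds for real $K\ge 0$, so nothing is lost); and (iii) making precise what ``$K_{max}=\infty$'' means---that the feasible set is unbounded, the infimum $\inf_K\psi=0$ is not attained, and along any minimizing sequence one has $E[R]\to\infty$. This last reading is the closest thing to a subtlety, since it is what turns the informal phrase ``infinite time'' into a theorem, and it is precisely what motivates imposing an explicit bound on $E[R]$ in the optimization that follows.
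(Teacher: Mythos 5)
Your proof is correct and follows essentially the same route as the paper: establish that with $\beta=0$ the objective $\alpha\gamma$ is strictly decreasing in $K$ (the paper via an explicit derivative, you via monotonicity of $e^{c/(K+1)}$, which is the same fact), hence $K=K_{max}$ is optimal, and then observe that as $K\to\infty$ the quantity $E[R]$ diverges (the paper reads this off from $\mu-\lambda\to 0$ in the formula for $E[R]$, you equivalently from $E[R]=(1-\pi_{up})/\gamma$ with $\gamma\to 0$). Your handling of the limit is in fact slightly cleaner than the paper's wording, which loosely says ``$\mu$ tends to $0$'' where it means the denominator $\mu-\lambda$ vanishes.
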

\begin{proof}[Proof:]
The objective function is given by
\begin{equation}
\psi(K)= \alpha \lambda \pi_{up} ((\pi_{up})^{-1/(K + 1)} - 1)  \label{eqpsi}
\end{equation}
The derivative of the expression above with respect to $K$ is 
\begin{equation}
\frac{d \psi(K)}{dK}=\alpha \frac{\lambda \log(\pi_{up}) \pi_{up} }{(K + 1)^2 (\pi_{up})^{1/(K + 1)}}
\end{equation}
which is readily verified to be always negative. Therefore, the minimum is reached when $K=K_{max}$.
  When $K=\infty$ it follows from \eqref{eqmu} that
$\mu$ tends to 0.  The mean time for reinsertion of the content in 
the cache is given by \eqref{er} which grows unboundedly as $\mu$ tends to 0.   
\end{proof}

\begin{proposition}
If $\beta>0$ the optimization problem ~\eqref{eq1}-\eqref{eq3} admits a unique minimum~$K^{\star}$.
\end{proposition}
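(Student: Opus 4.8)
The plan is to collapse $\psi$ into a one–dimensional convex function of a single auxiliary variable that depends monotonically on $K$, and then read off unimodality. Starting from \eqref{eqpik}, introduce
\[
u \;=\; u(K) \;:=\; (\pi_{up})^{-1/(K+1)} - 1 ,
\]
so that $\psi$ takes the separable form
\[
\psi(K) \;=\; A\,u \;+\; \frac{B}{u},
\qquad A := \alpha \lambda \pi_{up} > 0,
\qquad B := \frac{\beta(1-\pi_{up})}{\pi_{up}\lambda}.
\]
The hypothesis $\beta>0$ is used here, and essentially only here: it makes $B$ \emph{strictly} positive, so the second term does not degenerate (when $\beta=0$ one has $B=0$ and recovers the monotone behaviour of the previous proposition).

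Two elementary observations then finish the argument. First, since $0<\pi_{up}<1$ we have $-\log\pi_{up}>0$, and $u(K)=\exp\!\big(-\log(\pi_{up})/(K+1)\big)-1$ is smooth and \emph{strictly decreasing} on the feasible range of $K$, mapping it bijectively onto an interval of strictly positive reals; in particular $u'(K)<0$ throughout. Second, $g(u):=Au+B/u$ is strictly convex on $(0,\infty)$, with $g'(u)=A-B/u^2$ and a unique global minimiser $u^{\star}=\sqrt{B/A}$; it is strictly decreasing on $(0,u^{\star})$ and strictly increasing on $(u^{\star},\infty)$.

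Composing, $K\mapsto g'(u(K))$ is strictly decreasing (a strictly increasing $g'$ precomposed with a strictly decreasing $u$), so $\psi'(K)=g'(u(K))\,u'(K)$ changes sign at most once — from negative for small $K$ (there $u$ is large and $g'(u)>0$) to positive for large $K$ (there $u\to 0^{+}$ and $g'(u)\to-\infty$). Hence $\psi$ is strictly unimodal on the feasible interval for $K$, which yields a unique minimiser $K^{\star}$: either the equation $u(K)=\sqrt{B/A}$ has its (unique) solution in the interior, in which case solving for $K$ gives the closed form
\[
K^{\star} \;=\; -1 \;-\; \frac{\log \pi_{up}}{\log\!\big(1+\sqrt{B/A}\big)} ,
\]
or $\psi$ is monotone on the whole interval and $K^{\star}$ is the corresponding endpoint.

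The only delicate point is boundary bookkeeping: one must check whether the closed–form critical point lies in the admissible set, and if not, invoke the single–sign–change property of $\psi'$ to conclude that the constrained optimum is attained at $K_{max}$ (or at the smallest admissible $K$) and is still unique. If $K$ is restricted to integer values, the same unimodality yields a unique integer minimiser, except in the degenerate case where two consecutive integers straddle $u^{\star}$ with equal objective value.
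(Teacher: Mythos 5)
Your argument is correct, and in fact it supplies something the paper does not: the paper states this proposition and omits the proof entirely (``for conciseness''), so there is no published argument to compare against. Your reduction is sound on every point I checked. The substitution $u=(\pi_{up})^{-1/(K+1)}-1$ does turn \eqref{eqpik} exactly into $Au+B/u$ with $A=\alpha\lambda\pi_{up}>0$ and $B=\beta(1-\pi_{up})/(\pi_{up}\lambda)>0$ (this is where $\beta>0$ and $\pi_{up}<1$ enter); $u(K)$ is indeed smooth, strictly decreasing and positive for $0<\pi_{up}<1$; $g(u)=Au+B/u$ is strictly convex on $(0,\infty)$ with unique minimizer $u^{\star}=\sqrt{B/A}$; and the chain-rule sign argument for $\psi'(K)=g'(u(K))\,u'(K)$ correctly shows that $\psi'$ changes sign at most once, from negative to positive, so $\psi$ is strictly unimodal in $K$ and the constrained minimizer on $\{0\le K\le K_{max}\}$ is unique whether it is interior (your closed form $K^{\star}=-1-\log\pi_{up}/\log(1+\sqrt{B/A})$ checks out, and is positive since $\log\pi_{up}<0$) or an endpoint. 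You were also right to flag the one delicate point yourself, namely that the sign of $\psi'$ at the smallest admissible $K$ determines whether the optimum is interior or at an endpoint; your single-sign-change argument covers both cases, and the remark about integer $K$ is consistent with the paper's Remark 3, which allows real $K$. This is a complete, self-contained proof of the proposition.
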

The proof is omitted for conciseness.

\subsection{Illustrative Example}
\label{sec:single-t-ex}

Next, we consider an illustrative example to show the tradeoff in the choice of $K$. 
Let $\pi_{up}=0.9$ and $\alpha=\beta=1$. 
Figure~\ref{myfigil} shows how cost first decreases and then increases, as $K$ increases.
The optimal is reached for $K=10$.  At that point, we have $E[R]=0.31$ and $\gamma=0.32$.

\begin{figure}[h!]
\center
\includegraphics[scale=0.3]{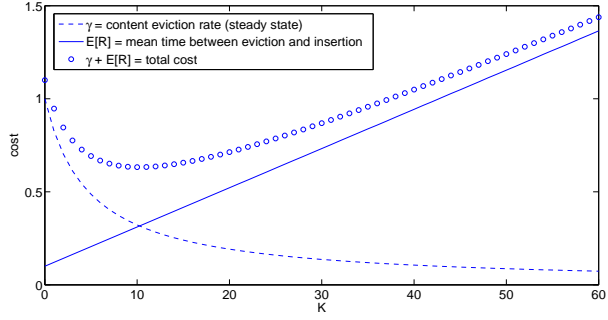}
\caption{$\pi_{up}=0.9$, $\alpha=\beta=1$} \label{myfigil}
\end{figure}

\textbf{Remark 1: }  Note that in problem~\eqref{eq1}-\eqref{eq3} we do not set a hard
constraint on $E[R]$.  Alternatively, we could add such a constraint, $E[R] \leq R^{\star}$.
 The solution of the 
modified problem is either the solution of the problem without the constraint (in case
the constraint is inactive) or the value of $K$ which yields  $E[R]=R^{\star}$
(in case the constraint is active).

\textbf{Remark 2: } If follows from Markov inequality that the solution of the 
problem~\eqref{eq1}-\eqref{eq3} naturally yields
a bound on the probability $P(R > r)$, i.e., $P(R > r ) < E[R]/r$.  
In the numerical 
example above, when $K$ is optimally set we have that the probability that the the 
content is
not reinserted into cache after $3.1$ units of time following an  eviction is  
$P(R > 0.31 \times 10) < 0.1$.  
In the numerical results we present latter $P(R > r)$ is calculated exactly
from the model.

\textbf{Remark 3: } We assume that $K$ can take real values. If $K$ is not an integer,
we can always randomize between the two closest integers when deciding whether to store or 
not the content.

\subsection{Reinforced counter with hysteresis}
\label{sec:hyst}

In this section we generalize the reinforced counter to allow for a third control knob
$K_{h}$ as follows.
The counter is incremented at each arrival request for a content and is decremented
at rate $1/\mu$.
In addition, the content is included into cache when the value of
the reinforced counter is incremented to $K+1$, as in previous section.
However, content is not removed from cache when the counter value is decremented
from $K+1$ to $K$.
Instead, content remains in cache until the counter reaches the (new) threshold $K_h$.
Figure \ref{fig:cache-in-out-hyst} illustrates the behavior of the new reinforced counter.
\begin{figure}[h!]
\center
\includegraphics[scale=0.6]{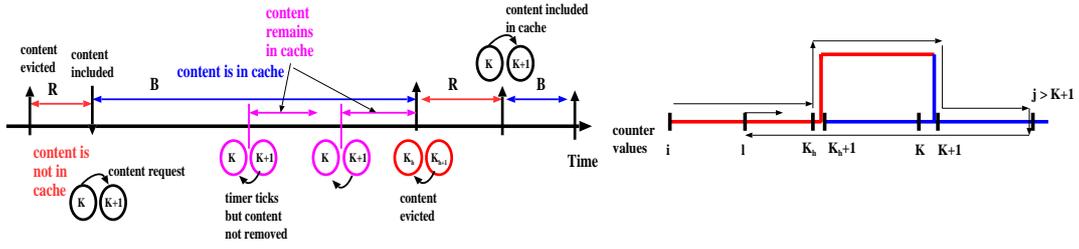}
\caption{Reinforced counter with hysteresis.
\label{fig:cache-in-out-hyst}
}
\end{figure}
From the figure, we observe that the purple intervals correspond to the values of the
reinforced counter between $K$ and $K_h$ and the content is in cache. 
The bottom of Figure \ref{fig:cache-in-out-hyst} shows a trajectory of the
reinforced counter that motivates the name \textit{hysteresis}.
The displayed trajectory is
$i \rightarrow K_h \rightarrow K_h+1 \rightarrow K \rightarrow K+1 \rightarrow 
j \rightarrow K_h+1 \rightarrow l$ etc.

At first glance, it seems that only the intervals where the 
content is in cache are affected by this new mechanism (the blue intervals)
but not the intervals where content is absent (the red intervals).
However, this is not true and both intervals are affected.
In what follows, we show how to calculate the expected values of $E[B]$ and $E[R]$
and how the measures of interest are affected by this new mechanism.
We also show the advantages of the reinforced counter with hysteresis.
For that, we refer to Figure \ref{fig:cache-in-out-hyst-MC}
that shows the Markov chain for the hysteresis counter.
\begin{figure}[h!]
\center
\includegraphics[scale=0.6]{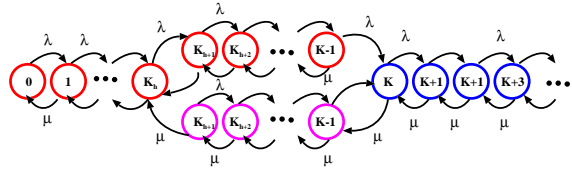}
\caption{Reinforced counter with hysteresis: state transition diagram.
\label{fig:cache-in-out-hyst-MC}
}
\end{figure}

Let $\nu(i+1) = E[B]$ and
$\xi(i+1) = E[R]$ both when $K - K_h = i$.
Note that $\nu(1)$ and $\xi(1)$ are the values of $E[B]$ and $E[R]$ for a cache
with no hysteresis.
$\nu(1)$ is obtained from equations \eqref{eq:pi_up-renew}, \eqref{eq:pi_up} and \eqref{er}.
$\xi(1) = \left({\sum_{j=0}^K 1/\rho^{j} }\right) / \left({ \lambda/(\lambda+\mu) }\right) $.

\begin{proposition}
\label{prop:nu_xi}
$\nu(i+1)$ and $\xi(i+1)$ can be obtained by the following recursions:
\begin{eqnarray}
\nu(i) &= \frac{1}{\mu} + \nu(i-1) + \rho \nu(1)          \;\; &i \geq 2 \label{eq:nu-rec} \\ 
\xi(i) &= \frac{1}{\lambda} + \xi(i-1) + \frac{1}{\rho} \xi(1) &2 \leq i \leq K-K_h+1  \label{eq:xi-rec} 
\end{eqnarray}

\end{proposition}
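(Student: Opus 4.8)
The plan is to read off $\nu(\cdot)$ and $\xi(\cdot)$ as mean first-passage times in the continuous-time Markov chain of Figure~\ref{fig:cache-in-out-hyst-MC} and to obtain the recursions by combining the strong Markov property with a first-event analysis. Concretely, when $K-K_h=i-1$, during a ``content in cache'' period the reinforced counter performs a birth--death walk with up-rate $\lambda$ and down-rate $\mu$, started at level $K+1$ and absorbed the first time it hits $K_h$; hence $\nu(i)=E[B]$ is the mean first-passage time over the $i$ levels from $K+1$ down to $K_h$. Symmetrically, during a ``content out of cache'' period the counter starts at $K_h$ and is absorbed at $K+1$, so $\xi(i)=E[R]$ is the mean first-passage time over the $i$ levels from $K_h$ up to $K+1$ --- but now the walk may descend toward the empty-counter state before climbing back, so the boundary at $0$ is felt.

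For $\nu$ I would peel off the topmost level. By the strong Markov property the first passage from $K+1$ to $K_h$ splits into the first passage from $K+1$ to $K$ followed by the first passage from $K$ to $K_h$; since the birth--death rates are spatially homogeneous and level $0$ is never visited before absorption at $K_h\ge 0$, the second piece is exactly $\nu(i-1)$. The first piece --- first passage one level down --- is the busy period of an $M/M/1$ queue, and by Wald's identity the expected number of services in a busy period is $1+\lambda\,\nu(1)$, so its mean length is $\tfrac1\mu+\rho\,\nu(1)$. This gives $\nu(i)=\tfrac1\mu+\nu(i-1)+\rho\,\nu(1)$ for $i\ge 2$ (equivalently $\nu(i)=\nu(i-1)+\nu(1)$, since $\tfrac1\mu+\rho\,\nu(1)=\nu(1)$), and the base case $\nu(1)=1/(\mu-\lambda)$ is the $M/M/1$ busy period already recorded in~\eqref{eq:expectB}.

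For $\xi$ I would do the analogous peeling, but now from the bottom level and with a genuine first-event analysis at state $K_h$: the walk either steps up --- contributing a term $\tfrac1\lambda$ and leaving a first passage over a band one unit narrower, i.e.\ $\xi(i-1)$ --- or it steps down below $K_h$ and must first climb one level back up before resuming. Aggregating the (geometrically many) such downward detours with their cost, one shows the correction can be written as $\tfrac1\rho\,\xi(1)$, where $\xi(1)$ is the no-hysteresis return time of~\eqref{er}, itself obtained from~\eqref{eq:pi_up-renew}--\eqref{eq:pi_up}. This yields $\xi(i)=\tfrac1\lambda+\xi(i-1)+\tfrac1\rho\,\xi(1)$, valid as long as the band stays inside $[0,K+1]$, i.e.\ $2\le i\le K-K_h+1$. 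Both recursions are then closed by an elementary induction on $i$.

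The main obstacle is the $\xi$ recursion. Downward first passages are memoryless --- the busy-period/Wald argument is clean and never feels the empty-counter boundary --- but an upward first passage in a down-drifting birth--death chain genuinely interacts with that boundary, so ``peeling one level'' is not merely a homogeneity statement: one must account carefully for the number and cost of the downward excursions that precede the level crossing, verify that this bookkeeping collapses to the single term $\tfrac1\rho\,\xi(1)$, and check that it does so precisely in the claimed range $2\le i\le K-K_h+1$. Pinning down the exact transition structure of the chain in Figure~\ref{fig:cache-in-out-hyst-MC} --- in particular the role of the doubled states in the hysteresis band $K_h<v\le K$ --- is what makes this step delicate.
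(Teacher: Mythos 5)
Your treatment of the $\nu$ recursion is sound and is essentially the renewal/first-passage argument the paper gestures at: peeling the top level, the one-step descent is an M/M/1 busy period of mean $1/(\mu-\lambda)=\tfrac{1}{\mu}+\rho\,\nu(1)$ (your Wald accounting), and downward passages never feel the empty-counter boundary before absorption at $K_h\ge 0$, so spatial homogeneity gives $\nu(i)=\tfrac1\mu+\nu(i-1)+\rho\,\nu(1)$, consistent with \eqref{eq:nu-rec} and the base case \eqref{eq:expectB}.

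The gap is in the $\xi$ half, at exactly the step you flag but never carry out: the claim that the downward-detour bookkeeping ``collapses to $\tfrac1\rho\,\xi(1)$.'' Carry the first-event analysis at $K_h$ to completion. Let $\theta(m)$ be the mean first-passage time of the counter from $m$ to $m+1$; the reflecting boundary at $0$ gives $\theta(0)=1/\lambda$, $\theta(m)=\tfrac1\lambda+\tfrac1\rho\,\theta(m-1)$, and $\xi(1)=\theta(K)$ (this reproduces \eqref{er} with $\pi_{up}=\rho^{K+1}$). Conditioning on the first transition out of $K_h$ and solving yields $\xi(i)=\tfrac1\lambda+\xi(i-1)+\tfrac1\rho\,\theta(K_h-1)$: the cost of an excursion below the eviction threshold is governed by the climb from $K_h-1$ back to $K_h$, and because upward passages are \emph{not} translation invariant (they feel the distance to the empty-counter state), this term depends on $K_h$, i.e.\ on $i$. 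Since $\theta$ is strictly increasing, $\theta(K_h-1)\ne\theta(K)=\xi(1)$ for every $K_h\le K$, so the substitution your proof relies on is not a consequence of the geometric-detour counting you invoke; asserting the collapse is asserting the proposition itself. To close the argument you would have to exhibit the feature of the chain in Figure~\ref{fig:cache-in-out-hyst-MC} that removes the $K_h$-dependence of the detour cost (a reflecting barrier at $K_h$ during the out-of-cache phase would not do it either, since then the detour term disappears rather than becoming $\tfrac1\rho\xi(1)$), or else prove the recursion with the $i$-dependent term $\theta(K_h-1)$ and reconcile it with the stated one; as written, the $\xi$ recursion is not established. The paper's own proof is omitted (``renewal arguments''), so no guidance is available there, but the burden of the delicate step you yourself identify remains undischarged.
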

\begin{proof}[Proof:]
The proof follows from renewal arguments, and is omitted due to space limitations.
\end{proof}
It is important to note that explicit expressions for $\nu(i)$ and $\xi(i)$ can 
be obtained as a function of $\lambda$, $\mu$ and $K$ but details are omitted since the
recursion above suffices to explain our comments.

Suppose $\lambda$ and $\pi_{up}$ are given and we obtain
$K$ and $\mu$, for instance from the optimization problem in the previous section. 
We allow $K_h$ to vary from $K_h=K$ (that is reinforced counter without hysteresis) to $K_h=0$.
\begin{proposition}
\label{prop:gamma}
As $K_h$ decreases, the rate $\gamma(K,\mu)$ at which content enters the cache also decreases.
\end{proposition}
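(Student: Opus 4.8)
The plan is to reduce the statement to the monotonicity of the sequences $\nu(\cdot)$ and $\xi(\cdot)$ furnished by Proposition~\ref{prop:nu_xi}. First I would record the renewal identity behind \eqref{enter}, now in the presence of hysteresis. With $\lambda,\mu,K$ (equivalently $\rho$) held fixed, the content‑state process alternates between an in‑cache sojourn of mean $E[B]=\nu(K-K_h+1)$ and an out‑of‑cache sojourn of mean $E[R]=\xi(K-K_h+1)$; by the strong Markov property the reinforced counter is at state $K+1$ with a memoryless tick clock at every cache‑insertion epoch, so these cycles form a bona fide renewal process, each carrying exactly one insertion event. Exactly as in \eqref{enter},
\begin{equation*}
\gamma(K,\mu)=\frac{1}{E[B]+E[R]}=\frac{1}{\nu(K-K_h+1)+\xi(K-K_h+1)} .
\end{equation*}

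Next I would extract monotonicity from Proposition~\ref{prop:nu_xi}. Subtracting consecutive terms in \eqref{eq:nu-rec} and \eqref{eq:xi-rec} gives, for every admissible index $i\ge 2$,
\begin{equation*}
\nu(i)-\nu(i-1)=\frac{1}{\mu}+\rho\,\nu(1)>0,\qquad \xi(i)-\xi(i-1)=\frac{1}{\lambda}+\frac{1}{\rho}\,\xi(1)>0 ,
\end{equation*}
the positivity being immediate since $\lambda,\mu,\rho>0$ while $\nu(1)=E[B]>0$ and $\xi(1)=E[R]>0$ for the no‑hysteresis cache. Hence $i\mapsto\nu(i)$, $i\mapsto\xi(i)$, and therefore $i\mapsto\nu(i)+\xi(i)$, are strictly increasing over the range of indices in play, $1\le i\le K+1$.

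Finally I would combine the two ingredients. Decreasing $K_h$ from $K$ toward $0$, with $\lambda,\mu,K$ fixed, strictly increases the index $i=K-K_h+1$; by the monotonicity just established $\nu(i)+\xi(i)$ strictly increases, so $\gamma(K,\mu)=1/(\nu(i)+\xi(i))$ strictly decreases, which is the claim. The step needing the most care is the first one — confirming that the renewal–reward argument underpinning \eqref{enter} survives intact under hysteresis — together with the bookkeeping check that every index touched while unrolling \eqref{eq:xi-rec} stays inside its stated validity window $2\le i\le K-K_h+1$; both are routine. Intuitively, lowering $K_h$ simultaneously stretches the in‑cache period (the counter must descend further, from $K+1$ all the way to $K_h$) and the out‑of‑cache period (it must then climb back from $K_h$ to $K+1$), so replacement cycles lengthen and insertions become rarer.
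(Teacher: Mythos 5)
Your proposal is correct and follows essentially the same route as the paper: invoke Proposition~\ref{prop:nu_xi} to get that $E[B]=\nu(K-K_h+1)$ and $E[R]=\xi(K-K_h+1)$ grow as $K_h$ decreases, then apply $\gamma(K,\mu)=1/(E[B]+E[R])$ from \eqref{enter}. You merely spell out what the paper leaves implicit (the positive increments $1/\mu+\rho\,\nu(1)$ and $1/\lambda+\xi(1)/\rho$ read off from the recursions, and the renewal-cycle justification of \eqref{enter} under hysteresis), which is a sound and complete filling-in of the same argument.
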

\begin{proof}[Proof:]
From Proposition \ref{prop:nu_xi}, it is not difficult to see that
both $E[B]$ and $E[R]$ increase with $K_h$.
Since $\gamma(K,\mu) = 1/(E[B] + E[R])$ (equation \eqref{enter}) the result follows.
\end{proof}

Proposition \ref{prop:gamma} shows that, from an initial value of $\pi_{up}$, if we fix the 
parameters $\lambda$ and $K$,
the rate at which content is replaced (both included and removed from cache) decreases
by using the hysteresis mechanism, which is good to lower costs as explained in the
previous section.
However, $\pi_{up}$ also varies.
As a consequence of Proposition \ref{prop:nu_xi} we can show that $\pi_{up}$ is reduced.
This is not obvious since $E[B]$ increases.
But, by adjusting the knob $\mu$, $\pi_{up}$ can be maintained constant while $\gamma$ is reduced
when the hysteresis schema is used.
The proof of this last result is omitted but it follows from Proposition \ref{prop:nu_xi}.
The numerical results presented in the Section \ref{sec:numerical} corroborate the claim.

There are additional advantages of the hysteresis mechanism.
First note that, in the previous sections, we assumed that file download times are negligible. 
However, when a user requests for a content it is important that the whole file remains
stored at the cache not only until this user finishes downloading
but also while other users are downloading the same content from that cache. 
Hysteresis is helpful to prevent the file from being
removed before its download is concluded by all requesters.
In Section \ref{sec:numerical} we show that hysteresis increases the probability that a content
remains in cache for at least some time $t$ after it is cached.
Note that this is an additional (transient) performance measure and it differs from the $\gamma$ metric
(steady state rate).
As our numerical results show, by adjusting $K_h$ we can improve both steady state and transient metrics.

Our numerical results also show that hysteresis reduces the
coefficient of variation of the time that content resides in the cache ($B$). 
This is important for cache capacity planning with multiple contents, 
since more predictable systems are usually easier to design and control.
%

\subsection{Numerical Results}
\label{sec:numerical}

In this section we show some numerical results obtained for the models: 
the reinforced counter 
with a single threshold~($K$) and the reinforced counter with two thresholds~($K$,$K_h$). 
Three performance measures were used to analyze the models: 
the rate at which content enters the cache ($\gamma$), 
the cumulative 
distribution of the time the content takes to return to the cache~(R)
and the cumulative 
distribution of the time the content remains in the cache after insertion~(B).

Three scenarios were evaluated: (a) the reinforced counter has a 
single threshold $K=11$, (b) the reinforced counter has two thresholds
$K=11$ and $K_h=7$, and (c) the reinforced counter has two thresholds
$K=11$ and $K_h=2$. 
For each scenario, we consider the same value of $\pi_{up}$, $\lambda$,
and $K$. 

The value of $\gamma$ for scenario (a) is $0.24$, for (b) is $0.07$ 
and for (c) is $0.04$. 
We note that the rate at which content enters or leaves the cache decreases
as the value of $K_h$ decreases. 
This is one of the advantages of introducing a third control knob
$K_{h}$.  

Figures \ref{fig:dist-R-B}(a) and \ref{fig:dist-R-B}(b) show the cumulative 
distribution of R and B. 
We note that the reinforced counter with hysteresis allow to control the
probability distribution of R and B. 
In Figure \ref{fig:dist-R-B}(a), consider for example $t=4$.
If we set $K_h=2$, we have $P[R<4]=0.35$ and if $K_h=7$, then $P[R<4]=0.65$.
As the value of $K_h$ increases, the probability of R be less than a
certain value of $t$ increases.
This behavior can also be observed  for the distribution of B. 
On the other hand, if we consider the model with a single threshold
we can not control the distribution of R and B.
In Figure \ref{fig:dist-R-B}(a), $P[R<4]=0.9$. 

Another advantage of the reinforced counter with hysteresis, is that
the coefficient of variation of R and B, decreases with the value of
$K_h$, which means that the dispersion of the distribution of R and B also
decreases. 
The values obtained for the coefficient of variation of B for each scenario are: 
(a) $1.6$, (b) $1.3$ and (c) $1.1$. 

\begin{figure}[htb]
\hspace{-0.3in}
\includegraphics[scale=0.4]{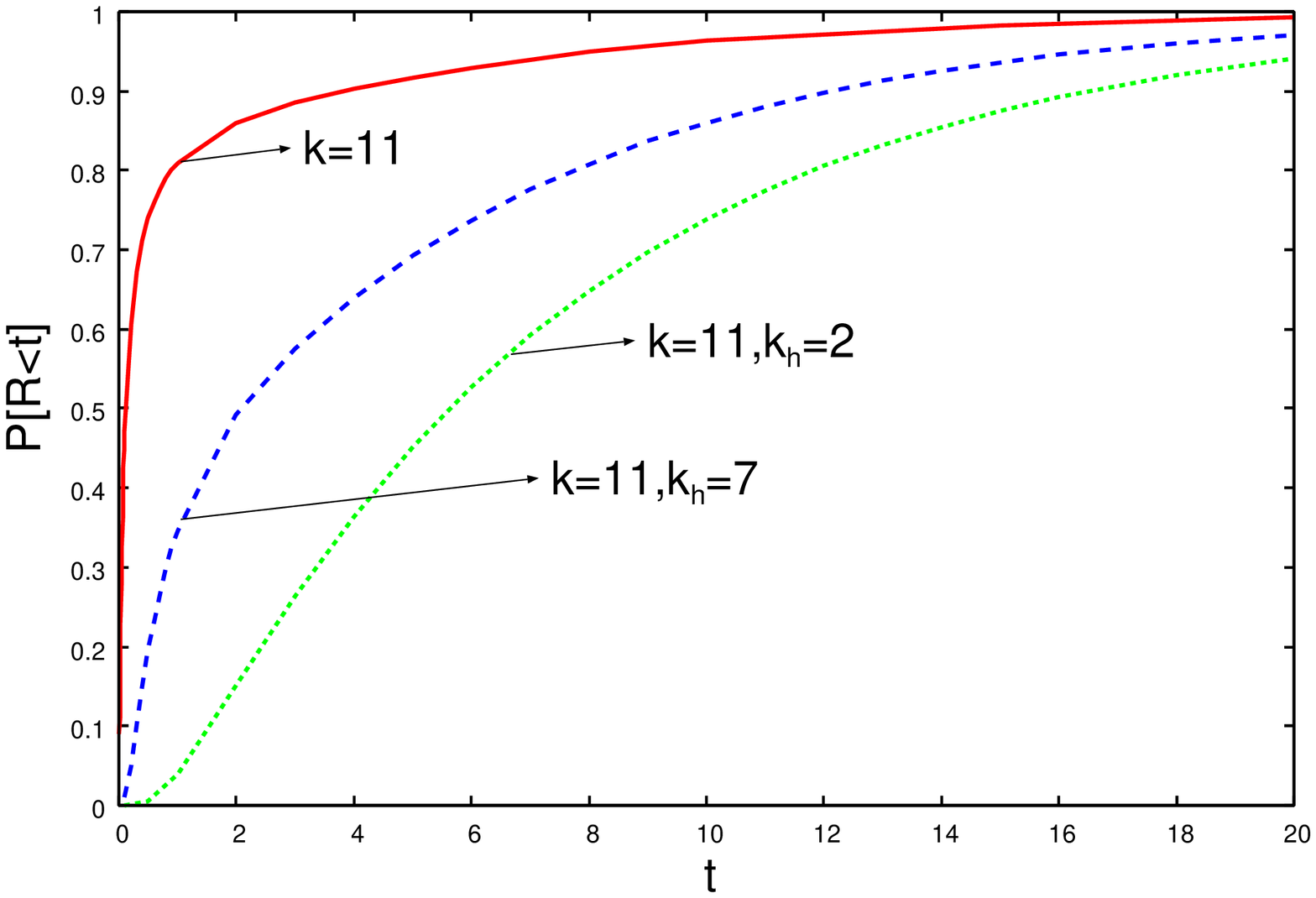}
\includegraphics[scale=0.4]{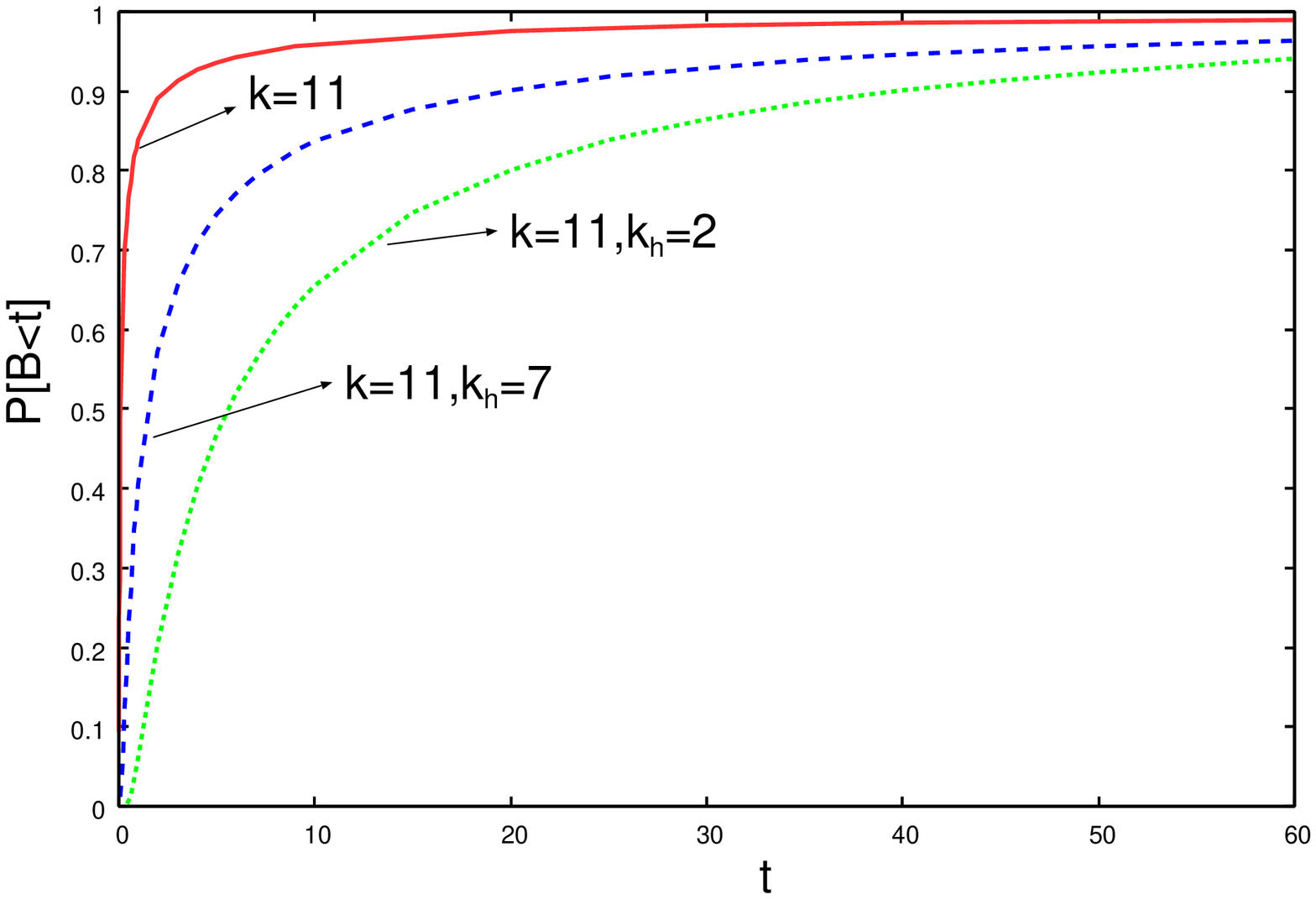}
\center
\begin{tabular*}{1.0\textwidth}{@{\extracolsep{\fill} }ll}
(a) Cumulative Distribution of R & (b) Cumulative Distribution of B
\end{tabular*}
\caption{Cumulative Distribution of the time the content takes to return/remains 
in the cache.} 
\label{fig:dist-R-B}
\end{figure}

\section{Multiple Caches} 
\label{model}

In the previous section we argued that we can decouple the cache storage dynamics of each content
and study each content in isolation.
Roughly, decoupling is a consequence of statistical multiplexing, for content placement
policies that limit the amount of time a single content is cached, independently of
the other requested contents.
This is the case of the \textit{reinforced counter} policies we study.
We showed the advantages of these policies and the flexibility they bring to the design of caches.

In this section we address the problem of multiple caches in a network.
The main objectives are:
(a) to formulate the problem of content placement at a cache network; 
(b) to show that if we do not use a decoupling policy such as those we study in this paper,
the optimal content placement problem is NP hard.  
This last result emphasize the importance of employing a
cache placement mechanism like the reinforced counters.

\subsection{Model and Problem Formulation}

Let $\mathcal{F}$ and $\mathcal{C}$ be the set of files and caches in the system. 
There are $F=\mathcal{|F|}$ files and $C=\mathcal{|C|}$ caches in the system.   
File $f$ has size $t_f$, $1 \leq f \leq F$ and cache $i$ has (storage) capacity $s_i$, $1 \leq i \leq C$.  
Users issue exogenous requests for files. At each cache $i$, $1 \leq i \leq C$, 
exogenous requests for file $j$ arrive at rate $\lambda_{ij}$.  
We assume that cache $i$ has service capacity $\eta_i$ requests/s.  
Once a request arrives at a cache, 
\begin{enumerate}
\item \textbf{in case of  a cache hit}, the content is immediately transferred back to the requester 
through the data plane.  Our model is easily
adapted to account for the delay in searching for a  file in the cache, but for the sake of
presentation conciseness we assume zero searching cost in this paper;
\item \textbf{in case of a cache miss}, the request is added to the queue of requests to be serviced. 
When the request reaches the head of the queue,
it is transferred to one of the outgoing links, selected uniformly at random.
\end{enumerate}
\begin{figure}
\center
\includegraphics[scale=0.7]{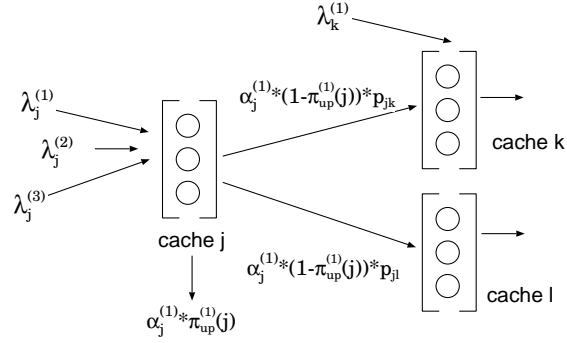}
\caption{Illustration of cache network}
\label{fig:cachenet}
\end{figure}
%

Let $A_{ij}$ be a variable that indicates if file $j$ is available at cache $i$.
>From the results of previous section, $A_{ij}=1$ with probability $\pi_{up}$ for content $j$
at cache $i$.
Therefore, we can compute $P[A_{ij}=1]$ using the derived results.
In addition, we may also study cases
in which content is statically placed into caches, i.e., the content in 
the caches is not replaced. 
This corresponds, for instance, to a system where the replacements of files from the caches occur at
a much coarser granularity than the requests.
In this case, $A_{ij}=1$  if file $j$ is at cache $i$ and 0 otherwise.

Let $M_{hi}$ be the adjacency matrix between caches, i.e., $M_{hi}=1$
if there is a direct path from cache $h$ to cache $i$ and 
0 otherwise.

Let $d_h$ be the outgoing degree of node $h$. 
Let $p_{hi}$ be the probability that a request from cache $h$ is routed to cache $i$. 
In this paper, we consider random walk routing (\cite{sbrc13,wperf14}), i.e., 
$p_{hi}= 1/|d_h|$ if $M_{hi} > 0$ and 0 otherwise.

Figure~\ref{fig:cachenet} illustrates the problem in the case where we have a cache network comprised of 3 nodes, $j$, $k$ and $l$, and 3 contents, indexed by 1, 2 and 3.
In the Figure, the exogenous arrival rates $\lambda_{j}^{(1)}, \lambda_{j}^{(2)}, \ldots$
of contents 1, 2 and 3 to cache $j$ are made explicit.  
Let $\pi_{up}^{(c)}(i)$ the designed parameter $\pi_{up}$ in previous sections, but for
cache $i$ and content $(c)$.
Let $\alpha_{j}^{(c)}$ represents the total input rate of requests to content $(c)$ arriving at cache $j$
(exogenous arrivals to cache $j$ plus those requests coming from other caches).
A fraction $\alpha_{j}^{(c)} \pi_{up}^{(c)}(j)$
is immediately served, and 
fractions  $\alpha_{j}^{(c)} (1 - \pi_{up}^{(c)}(j) )  p_{jk}$ 
and $\alpha_{j}^{(c)} (1 - \pi_{up}^{(c)}(j) ) p_{jl}$
are transferred to caches $k$ and $l$,
respectively.  

%
%

In the remainder of this paper, we will assume that requests for files that are cached at $i$, 
$1 \leq i \leq C$, are immediately processed at $i$, and we will
be concerned  with  transfers that occur due to misses.  
\begin{equation}
\alpha_{j}^{(c)} = 
\lambda_{j}^{(c)}   + \sum_{k=1,k \neq j}^C \alpha_{k}^{(c)} (1-\pi_{up}^{(c)}(k)) p_{kj}
\end{equation}
%
%

We can compute the mean response time from the  expected total number of requests in the system.
Using Little's law, and recalling that the processing rate is $\eta_i$ we have that the average number 
of requests at cache $i$ is $E[N_i]=\alpha_i /\eta_i$.
Then, the expected total number of requests in the system is $E[N]=\sum_{i=1}^C E[N_i]$ and the mean response 
time, $E[T]=E[N]/\Lambda$ where $\Lambda$ is the sum of the exogenous arrival rates to all caches.

\begin{definition}
The cache network is stable if and only if $\eta_i > \alpha_i$ for all $i$, $1 \leq i \leq C$.
\end{definition}


Until this  point, we  focused on the  use of reinforced  counters for
content placement. In the remainder of the paper, we show that without
reinforced counters  a natural statement of  optimal content placement
yields an NP hard problem.

\begin{definition} \label{defini}
The optimal content placement problem consists of finding the mapping  $A: \mathcal{F} \rightarrow \mathcal{C}$ such that
\begin{enumerate}
\item the cache network is stable
\item $\sum_{f=1}^F 1_{A_{if}=1} t_f \leq s_i$, for $i=1,\ldots F$
\item the arrival rates of requests that require processing at the control plane, $\sum_{i=1}^{C} \alpha_i$, is minimized
\end{enumerate}
\end{definition}

\subsection{Static Content Placement in Cache Networks is NP Hard}
\label{complexity}

We show that the problem of static optimal content placement in cache networks is NP hard. 
This result is interesting because it shows the importance of using a cache replacement policy
like those we study in this paper.
Recall that, by using the reinforcement policy, we can optimize the counter parameters
to satisfy a given probability of finding a content in cache for each content in isolation.
These probabilities are in turn obtained to satisfy a given cache capacity constraint,
from statistical multiplexing arguments. (the problem is identical to sizing a telephone network
and one can use the results of that area.) 

First, we consider the case where 
different files have different sizes. In this case, it is possible to show that  the problem is NP hard even if we need to make placement 
decisions at a single cache. To this aim, we can consider a mapping from {\sc Knapsack} (the proof is omitted due to space constraints). Then, we consider the simpler case wherein all files have the same size.   In this case, if we need to make 
placement decisions at a single cache, the problem can be solved using a greedy strategy. Nevertheless, we show that the problem is still
NP hard if we need to make placement decisions in at least two caches.

\begin{theorem}The optimal content placement problem is NP hard.
\end{theorem}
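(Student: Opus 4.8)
The plan is to prove NP-hardness by a polynomial-time reduction from the decision version of \textsc{Knapsack}, realized by an instance of the optimal content placement problem on just two caches. Given a \textsc{Knapsack} instance with $n$ items of weights $w_1,\dots,w_n$, values $v_1,\dots,v_n$, bag capacity $c$ and target value $V$, I would build a cache network with two caches, $1$ and $2$, joined by a single link, and $n$ files, where file $j$ has size $t_j=w_j$ (equal to the item's weight). Cache $1$ is given storage capacity $s_1=c$, and cache $2$ is given storage capacity $s_2=\sum_{j=1}^n t_j$, so cache $2$ can hold every file. Exogenous requests arrive only at cache $1$, with $\lambda_{1j}=v_j$ (equal to the item's value) and $\lambda_{2j}=0$ for all $j$. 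Finally I would set $\eta_1=\eta_2$ strictly larger than $\sum_{j=1}^n v_j$, so that the stability requirement $\eta_i>\alpha_i$ holds for every placement; this makes condition~1 of Definition~\ref{defini} vacuous and leaves only the storage constraint and the objective.

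Then I would read off the combinatorics. Since $A$ sends each file to exactly one cache, write $C_1=A^{-1}(1)$; the complementary set is stored at cache $2$ and is always feasible there. Condition~2 at cache $1$ is exactly $\sum_{j\in C_1}w_j\le c$, i.e.\ $C_1$ is a feasible knapsack packing. A request at cache $1$ for $j\in C_1$ is a hit with no control-plane cost; a request for $j\notin C_1$ misses and is forwarded on the unique outgoing link to cache $2$, where it hits and triggers no further control-plane work. As cache $2$ receives only these forwarded requests, the total rate of requests needing control-plane processing is $\sum_i\alpha_i=\sum_{j\notin C_1}\lambda_{1j}=\bigl(\sum_{j=1}^n v_j\bigr)-\sum_{j\in C_1}v_j$; if instead $\alpha_i$ is read as the full arrival rate at cache $i$, the objective changes only by the additive constant $\sum_j\lambda_{1j}$, so the argument is unaffected. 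Hence there is a placement of objective at most $\bigl(\sum_j v_j\bigr)-V$ if and only if some knapsack-feasible $C_1$ has $\sum_{j\in C_1}v_j\ge V$. Since \textsc{Knapsack} is NP-hard and the construction is polynomial, the optimal content placement problem is NP-hard.

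I expect the only real difficulty to be the bookkeeping around the side constraints, not the reduction itself: one must check that the source cache always has room for every non-placed file and never becomes the binding constraint for stability, and one must fix the precise reading of the objective $\sum_i\alpha_i$ and of ``a mapping $A:\mathcal F\to\mathcal C$'' --- in particular that each file lies in exactly one cache and that a degree-one cache forwards misses deterministically, so that no random-walk bouncing arises inside the gadget. The choices above dispatch all of these. As a closing remark, this already yields the theorem, and it can be strengthened: the problem stays NP-hard even when all files share a common size. That case is genuinely harder to argue, since with equal sizes the single-cache instance is solvable by a greedy exchange argument; the hardness must then be extracted from the interplay of capacity constraints at two or more caches together with the random-walk routing of missed requests, which calls for a more elaborate construction than the two-cache gadget above.
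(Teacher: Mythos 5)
Your reduction is correct: the two-cache gadget with $s_1=c$, a sink cache that can hold everything, exogenous rates $v_j$ only at cache~1, file sizes $w_j$, and oversized service capacities does make the stability constraint vacuous and turns the threshold question ``objective $\le \sum_j v_j - V$'' into exactly the \textsc{Knapsack} decision problem, and your handling of the two readings of $\sum_i \alpha_i$ (with or without the constant $\sum_j \lambda_{1j}$) is sound. However, it is a genuinely different route from the paper's. The paper does not argue through the objective at all: it defines \textsc{FeasibleCache} --- deciding whether the network admits \emph{any} placement satisfying the stability and capacity constraints of Definition~\ref{defini} --- observes that the optimization problem is at least as hard, and gives a Turing reduction from \textsc{Partition} to \textsc{FeasibleCache}. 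This is precisely how the authors handle the case they emphasize in the surrounding text: when all files have the same size a single cache is solvable greedily, so they locate the hardness in the interplay of constraints across two or more caches, i.e.\ in mere feasibility, rather than in size heterogeneity driving the objective. Your approach is essentially the \textsc{Knapsack} argument the authors only allude to (for heterogeneous sizes) before stating this theorem; it is more elementary and self-contained, but it proves only the literal statement and leaves the equal-size case open --- which you candidly flag in your closing remark, and which is exactly the strengthening the paper's \textsc{Partition}/\textsc{FeasibleCache} reduction is designed to deliver (and which also carries the stronger message that even finding a feasible placement, independent of any objective, is NP hard).
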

\begin{proof}[Proof:]
We refer to {\sc FeasibleCache} as the problem of deciding if a given cache network admits a feasible content placement.
Note that the problem of finding the optimal placement, i.e. the {\sc PlacementProblem}, must be harder than {\sc  FeasibleCache}, 
since solving the optimization problem yields  a solution to {\sc  FeasibleCache}.  We proceed with a Turing reduction from the {\sc Partition} problem to the {\sc FeasibleCache} problem.  
The complete proof is omitted due to space limitations.  \eat{ is included in the appendix for reference and } \end{proof}

\section{Related Work} \label{related}

The literature on caching~\cite{caching} and content placement~\cite{lopresti1}, and its relations to networking~\cite{neves2014new}, database systems  and operating systems~\cite{nelson1988caching} is vast.  Nevertheless,
the study of cache networks, which encompass networks where routing and caching decisions are taken together
at devices which work as routers and caches is scarce~\cite{elisha1}.  In this work, we present a systematic analysis of cache networks using reinforced counters and indicating their applicability both in the single cache as well as multiple cache settings. 

Cache networks play a key role in content centric networks (CCNs)~\cite{ccn1, ccn2}.   CCNs are emerging
as one of the potential architectures for the future Internet.  In CCNs, content rather than hosts is addressed.  
The \emph{glue}  that binds CCNs are the content chunks rather than the IP packets~\cite{ccn1}.

\eat{
Two classical and distinct solutions to the content placement problem consist of 1) placing a fraction of every content in each cache 
 ~\cite{towsley}  or 
2) placing in caches only the most popular contents~\cite{fayazbakhsh2013less}.  In this work, we 
assume  the use of reinforced counters for content placement.  When considering the hardness of content placement, we assume that every content must be placed in at least one  
cache.   Whereas previous work assumes that content can always be retrieved from a custodian in case it is not available in the caches,
in this paper we assume that the caches are the only sources of content.  Previous works assumed that caches were used exclusively 
to increase performance, while  in this work we consider publishers that leverage caches to reduce service costs.  }

Using caches to reduce publishing costs can be helpful for publishers that cannot afford staying online all the time, or need to
 limit the bandwidth consumed to replicate content.  With caches, content can \emph{persist} in the network even in the absence of a publisher~\cite{ccn2}.
 In case of intermittent publishers, content availability can only be guaranteed in case replicas of the content are \eat{permanently } placed
 in some of the caches. \eat{, as considered in this paper.  }
 
 Approximate and exact analysis of single caches and cache networks has been studied for decades~\cite{exact, bianchi2013general, towsley1}. Approximate analysis is usually carried out using mean field approaches~\cite{bianchi2013general} or other asymptotic techniques~\cite{jelenkovic1999asymptotic}, whereas exact analysis is performed accounting for  Markov chain properties~\cite{fagin1978efficient} or assuming uncoupled content dynamics~\cite{towsley1}.  In this paper, we consider reinforced counters as the placement algorithms, which are at the same time amenable to analytical study and of practical interest in the context of DNS systems or the novel Amazon ElastiCache~\cite{amazon, exact}.  Cache policies are traditionally devised to avoid staleness of content and/or improve system efficiency.  The analysis presented in this paper shows that reinforced counters can be used to target these two goals.

\section{Conclusion and Future Work} 
\label{concl}

In this  paper we have studied  the problem of  scalable and efficient
content placement in cache  networks.  We analyzed an opportunistic caching policy
using reinforced counters
(\cite{sbrc13}) as an efficient mechanism for content placement and proposed an
extension to the basic scheme (reinforced counters with hysteresis). 
Using reinforced counters, we indicated how to tune simple knobs in order to
reach the optimal  placement.  We then showed  some of the
properties of the mechanism and  the optimal content placement  problem. 
Finally, using
the  queueing-network model we propose for a  cache network,  we showed
how to calculate the overall expected delay for a request to obtain a content.
In addition, we showed that
obtaining the optimal  solution without reinforced counters is an
NP hard problem.

This work is  a first step towards the  characterization of efficient,
scalable  and tractable  content placement  in cache  networks.
Future work  consists of  studying how to  distributedly
tune the caches,  as well as allowing the  presence of custodians that
store permanent copies of the files.

\bibliographystyle{sbc}
\bibliography{cache}


\end{document}